\theoremstyle{plain}
\newtheorem{thm}{Theorem}
\newtheorem{cor}{Corollary}
\newtheorem{prop}{Proposition}
\newtheorem{lemma}{Lemma}
\theoremstyle{definition}
\newtheorem{defn}{Definition}
\newtheorem{ex}{Example}
\theoremstyle{remark}
\renewcommand{\rm}[1]{\mathrm{#1}}
\newcommand{\bb}[1]{\mathbb{#1}}
\newcommand{\cl}[1]{\mathcal{#1}}
\newcommand{\R}{\bb{R}}
\newcommand{\card}[1]{| #1 |}
\newcommand{\ac}{\mathcal{A}} 
\newcommand{\p}{\mathcal{I}} 
\newcommand{\w}{\mathrm{W}} 
\newcommand{\ui}{\mathrm{U}_i} 
\newcommand{\G}{\mathrm{G}} 
\newcommand{\aopt}{a^{\rm{opt}}} 
\newcommand{\ane}{a^{\rm{ne}}} 
\newcommand{\br}{\rm{Br}_i} 
\newcommand{\abri}{a_\rm{br}^i}  
\newcommand{\poa}{\rm{PoA}}  
\newcommand{\pose}{\rm{PoSE}}  
\title{Quality of Non-Convergent Best Response Processes in Multi-Agent Systems through Sink Equilibrium
}
\author{Rohit Konda \and Rahul Chandan \and Jason R. Marden%
\thanks{R. Konda (\texttt{rkonda@ucsb.edu}), R. Chandan, and J. R. Marden are with the Department of Electrical and Computer Engineering at the University of California, Santa Barbara, CA. This work is supported by \texttt{ONR grant \#N00014-20-1-2359} and \texttt{AFOSR grants \#FA9550-20-1-0054} and \texttt{\#FA9550-21-1-0203}.}%
}
\begin{document}

\maketitle
\thispagestyle{empty}
\pagestyle{empty}

\begin{abstract}
Examining the behavior of multi-agent systems is vitally important to many emerging distributed applications - game theory has emerged as a powerful tool set in which to do so. The main approach of game-theoretic techniques is to model agents as players in a game, and predict the emergent behavior through the relevant Nash equilibrium. The virtue from this viewpoint is that by assuming that self-interested decision-making processes lead to Nash equilibrium, system behavior can then be captured by Nash equilibrium without studying the decision-making processes explicitly. This approach has seen success in a wide variety of domains, such as sensor coverage, traffic networks, auctions, and network coordination. However, in many other problem settings, Nash equilibrium are not necessarily guaranteed to exist or emerge from self-interested processes. Thus the main focus of the paper is on the study of sink equilibrium, which are defined as the attractors of these decision-making processes. By classifying system outcomes through a global objective function, we can analyze the resulting approximation guarantees that sink equilibrium  have for a given game. Our main result is an approximation guarantee on the sink equilibrium through defining an introduced metric of misalignment, which captures how uniform agents are in their self-interested decision making. Overall, sink equilibrium are naturally occurring in many multi-agent contexts, and we display our results on their quality with respect to two practical problem settings.
\end{abstract}

\section{Introduction}
\label{sec:int}
The analysis of multi-agent systems has received a significant amount of attention recently, primarily due to the emergence of distributed structures in wireless communication, biology, IOT, and many other application domains. This has led to a rich variety of theoretical approaches \cite{kubisch2003distributed, babaoglu2006design, stolpe2016internet}. In this paper, we specifically consider a \emph{game theoretic} approach, where the emergent properties of the multi-agent system are studied using tools from game theory. 

The main idea of this approach is to model agents as self-interested decision makers, where each agent's preference over the collective system outcome is designated through a \emph{utility function}. The agents are then presumed to undergo a \emph{best (or better) response process}, where each agent updates their decision in a self-interested manner to maximize their individual utility. The emergent system outcomes from this process are traditionally expected to be \emph{Nash equilibrium}, which can be expressed as the limit points of the best response process. Thus, many previous works study the emergent system behavior through characterizing properties of the Nash equilibrium; this has been done in many different contexts, such as traffic systems, power networks, etc. \cite{roughgarden2005selfish, atzeni2013noncooperative}.

\textbf{But is it reasonable to expect agents to converge to Nash equilibrium from best response processes?} In the class of \emph{potential games} \cite{monderer1996potential} and variants \cite{marden2007regret}, best response processes are indeed guaranteed to converge to Nash equilibrium \cite{swenson2018best}. In potential games, agents are fully cooperative, and self-interested decisions made by agents always lead to improvements in a given global objective. However, a variety of natural multi-agent settings fall outside of this class. The system may display competitive interactions; for e.g., business firms may have competing economic interests. In social systems, agents' may be inherently misaligned in their preferences; for e.g., drivers may have different sensitivities to tolls. Even for multi-agent systems that are fully engineered, there are operational concerns such as prediction errors or informational privacy that must be accounted for. All of these scenarios do not exhibit a potential game structure, and thus guarantees of convergence to Nash equilibrium can not be established. We describe two examples of this kind in Example \ref{ex:env} and \ref{ex:radio}. \textbf{In these instances, can the emergent system outcomes still be characterized, where either Nash equilibrium do not exist or best response processes does not converge to Nash equilibrium?} This is the main focus of this paper.

We utilize \emph{sink equilibrium} in \cite{goemans2005sink} as an alternate solution concept \footnote{A popular alternative to Nash equilibrium are \emph{coarse correlated equilibrium} which have existence and convergence guarantees \cite{hart2000simple}. However, these dynamics requires full knowledge of the history of decisions of all of the agents. Additionally, only empirical distribution of play is guaranteed to the converge to coarse correlated equilibrium, which may not correspond to the actualized decisions made by the agents.} to address this, which are specified as the attractors of the best response process. By definition, sink equilibrium are well defined and have guarantees of convergence for any given game. Thus, we analyze the behavior of sink equilibrium in this paper. Specifically, we assume that system outcomes are evaluated through a given global objective function. The utility function of each agent may not be aligned with this global objective. In these settings, we can characterize performance guarantees of the induced sink equilibrium with respect to the global objective - this is the main result of the paper.

To the author's knowledge, a general approach to studying performance guarantees of sink equilibrium has not been done previously. While sink equilibrium have not been studied in as much detail as Nash equilibrium, we still highlight an pertinent selection of past literature on sink equilibrium. The seminal work in \cite{goemans2005sink} first established the concept of sink equilibrium in a game-theoretic context, as well as provide negative results of sink equilibrium in valid utility games. Positive results on sink equilibrium behavior were shown in \cite{kleinberg2011beyond}, where it was seen that sink equilibrium perform much better than mixed Nash equilibrium. Analysis of sink equilibrium under the name of curb sets was done in \cite{basu1991strategy}. Several complexity results of sink equilibrium were introduced in \cite{fabrikant2008complexity}. Sink equilibrium were extended in continuous domains in \cite{papadimitriou2019game}. Recently, design of sink equilibrium selection algorithms was done in \cite{yan2021policy}. While the literature on sink equilibrium is sparse, they naturally emerge when agent utility functions are not aligned perfectly. 

The structure of the paper is as follows. In Section \ref{sec:model}, we define our game theoretic setup and formally define the best response process and sink equilibrium. Additionally, we introduce two application scenarios where sink equilibrium appear. In Section \ref{sec:results}, we state our main results, where we derive performance guarantees of sink equilibrium in our setting with regards the approximation ratio of \emph{price of sinking}. A discussion of sink equilibria induced by better response processes is provided in Section \ref{sec:better}. Finally, we conclude in Section \ref{sec:conc}. We include certain proofs in the Appendix.

\section{Preliminaries}
\label{sec:model}

Consider a general multi-agent scenario with $n$ agents $\p=\{1, \dots, n\}$, where each agent is endowed with a decision or action set $\ac_i$. We denote an action as $a_i \in \ac_i$, and a joint action profile as $a \in \ac = \ac_1 \times \cdots \times \ac_n$. The quality of each joint action profile is evaluated with a global objective function $\w: \ac \to \bb{R}_{\geq 0}$ that characterizes the total system welfare. In other words, the optimal joint decision of the system is described as
\begin{equation} \label{eq:opt}
    \aopt \in \arg \max_{a \in \ac} \w(a).
\end{equation}
We assume that solving the decision problem in Eq. \eqref{eq:opt} cannot be done in a centralized fashion, due to computational, informational, administrative, or communication concerns. Therefore, we assume that each agent selects their decisions in a distributed manner. We assume that each agent $i$ is endowed with a \emph{utility function} $\ui: \ac \to \R$ to classify their preferences over their decision set, resulting in the game tuple $\G \triangleq (\p, \ac, \w, \{\ui\}_{i \in \p})$. Moreover, we assume the utility function $\ui$ depends on the welfare function $\w$, either naturally or by design. In particular, consider a completely cooperative scenario, where the agents exhibit the \emph{common interest} utility design. Here, the utility functions are completely aligned with the global objective $\w$ \footnote{We can relax this constraint to consider utility functions that are \emph{preference equivalent} to the welfare function, where the ordering of preferences over joint actions is maintained. This is indeed the case for potential and weighted-potential games.}, where 
\begin{equation}
    \ui(a) \equiv \w(a) \text{ for all } a \in \ac \text{ and } i \in \p.
\end{equation}
Under the common interest utility, the emergent joint decisions coincide with the set of Nash equilibrium $\rm{NE}$ of the game. A joint action $\ane$ is considered a Nash equilibrium if the following inequality holds
\begin{equation}
    \ui(\ane) \geq \ui(a_i, \ane_{-i}) \text{ for all } a \in \ac \text{ and } i \in \p,
\end{equation}
where $a_{-i} = (a_1, \dots, a_{i-1}, a_{i+1}, \dots, a_n)$ corresponds to the joint action without the decision of agent $i$. In completely cooperative settings, where they are guaranteed to exist, we can then quantify the emergent behavior through the qualities of the possible resulting Nash equilibrium. This is done through the metric of \emph{price of anarchy} which is defined as
\begin{equation}
\label{eq:poa}
\poa(\G) = \frac{\min_{\ane \in \rm{NE}} \w(\ane)}{\max_{a \in \ac} \w(a)},
\end{equation}
where we take the worst case ratio of the welfare of Nash equilibrium over the optimal welfare. Price of anarchy is a well-studied metric, with many results on its characterization in the literature \cite{roughgarden2015intrinsic, paccagnan2021utility, roughgarden2005selfish}. Thus a standardized approach can be implemented to characterize system behavior in completely cooperative scenarios.

However, due to operational or design constraints, assuming a common interest utility design may not be feasible (see Examples \ref{ex:env} and \ref{ex:radio}). Throughout the paper, we then allow $\ui(a) \neq \w(a)$ to be misaligned, and focus our attention to sink equilibrium as our standard solution concept. To define sink equilibrium, we first outline the \emph{best response process}. Under this process, the best decision set for agent $i$ assuming all other agents' decisions are fixed is known as the \emph{best response set}, that is,
\begin{equation}
\label{eq:brset}
    \br(a) = \arg \max_{\bar{a}_i \in \ac_i} \ui(\bar{a}_i, a_{-i}).
\end{equation}
Then for every step, a randomly selected agent picks an action from its best response set uniformly. This induces the following Markovian dynamics on the set of joint actions, describing the best response process, as 
\begin{equation}
\label{eq:probrandbest}
    \rm{Pr}(\tilde{a} | a) = \begin{cases} 
    \frac{1}{n \cdot \card{\br(a)}} &\text{if } \tilde{a} \in \br(a) \text{ for some } i  \\
    0 & \text{otherwise},
    \end{cases}
\end{equation}
where $\rm{Pr}(\tilde{a} | a)$ represents the probability of reaching the joint action $\tilde{a}$ from $a$ in the Markov chain. Note that there is an equal chance for each player $i$ to perform a best response at each time step. We refer to a probability distribution over the joint action set as $\sigma \in \Delta \ac$, where $p_{\sigma}(a)$ denotes the probability of sampling $a$ under $\sigma$. We also say that an action $a \in \rm{supp}(\sigma) \subseteq \ac$ is in the support of $\sigma$ if the probability $p_{\sigma}(a) > 0$ is strictly positive. Furthermore, we say that $\sigma$ is a \emph{stationary distribution} of the Markov chain if the equality $p_{\sigma}(a) = \bb{E}_{\bar{a} \sim \sigma}[\rm{Pr}(\tilde{a} | \bar{a}) \cdot p_{\sigma}(\bar{a})]_a$ for all $a \in \ac$ holds. We also specify the \emph{sink strongly connected components} of the Markov chain defined in Eq. \eqref{eq:probrandbest}. A set $S \subseteq \ac$ is a sink strongly connected component if there is exists a path of positive probability from $a$ to $\bar{a}$ under \eqref{eq:probrandbest} for any $a, \bar{a} \in S$ and there are no transitions from $S$ to outside of $S$. Formally, if $a, \bar{a} \in S$, then there exists a sequence $a_0, a_1, \dots, a_m$ where $\rm{Pr}(a_{j+1} | a_j) > 0$ for all $j$ with $a_0 = a$ and $a_m = \bar{a}$. Additionally, if $a \in S$ and $\bar{a} \notin S$, then no such sequence exists. 

\begin{defn}
A probability distribution $\sigma \in \Delta \ac$ is a sink equilibrium of the game $\G$ if $\sigma$ is a stationary distribution of the Markov chain in Eq. \eqref{eq:probrandbest} and if $\rm{supp}(\sigma) = S$ is a sink strongly connected component.
\end{defn}

In other words, sink equilibria are defined as the attractors of the best response process. Given a game $\G$, we characterize the behavior of the sink equilibria in a similar manner to Eq. \eqref{eq:poa} through the metric of \emph{price of sinking} \cite{goemans2005sink} as
\begin{equation}
    \pose(\G) = \min_{\sigma \in \rm{SE}} \bb{E}_{a \sim \sigma} [\w(a)] / \w(\aopt),
\end{equation}
where $\rm{SE}$ denotes the total set of sink equilibrium of the game $\G$. We note that under a common interest utility design, the set of Nash equilibrium $\rm{NE} \simeq \rm{SE}$ is equivalent to the set of sink equilibrium.

We now examine the sink equilibrium on two illustrative examples in which they naturally emerge. In these examples, we derive guarantees on the sink equilibrium using the tools discussed in Section \ref{sec:results}. The examples are as follows.

\begin{ex}[Ecological Monitoring]
\label{ex:env}
Ecological monitoring is necessary to understand the well-being of inhabited populations as well as track health of overall ecology. While this can be handled by field ecologists, autonomous agents can supplement or even act as substitutes to gather important ecological data - as was done by the authors in \cite{notomista2019slothbot}. In this scenario, a high level control objective of the agents is to understand how to orient themselves to monitor the region of interest as best as possible. We can model this as a covering problem \cite{chvatal1979greedy}. In this way, let $\cl{R} = \{r_1, \dots, r_m\}$ define the region monitored by $n$ agents, where we have finitely partitioned the region into possibly arbitrary segments. For each segment $r$, the importance of monitoring that segment can be associated with a value $v_r \in \R_{\geq 0}$ which defines the intrinsic quality of data that can be collected in that segment. This can be affected by the number or magnitude of populations in the segment, relevant climactic conditions, etc. These parameters are never known before hand, and thus must be estimated by the agents in the field. Thus each agent has its noisy estimate of the value $v^i_r$ which we assume is drawn from a normal distribution $\cl{N}[v_r + c, (d \cdot v_r)^2]$ with bias $c$ and variance $(d \cdot v_r)^2$. Each agent can decide which subset of region to monitor (i.e. $a_i \subset \cl{R}$), which depend on its sensor and motor capabilities. Thus, collectively, the goal of the agents is to monitor the most and highest valued portions, as captured by the welfare function below.
\begin{equation}
    \label{eq:setcov}
    \w(a) = \sum_{r \in \cup_i a_i} v_r
\end{equation}
The objective that each agent witnesses, however is based on their estimate, or that their utility is $\ui(a) =\sum_{r \in \cup_i a_i} v^i_r$, which is potentially different for each agent. Thus, when running a best response algorithm, the agents converge to a sink equilibrium which may not be Nash equilibrium. We derive a guarantee on price of sinking in  Proposition \ref{prop:ex1}.

\begin{prop}
\label{prop:ex1}
Consider the problem defined above. The expected price of sinking is lower bounded by the following expression
\begin{equation}
    \bb{E}[\pose(\G)] \geq \max(\frac{1 - 4 n \beta_\Phi }{2}, 0),
\end{equation}
where $\Phi$ is the normal cumulative distribution function and $\beta_\Phi$ is defined as
\begin{equation}
\label{eq:misex1}
    \beta_\Phi = |\cl{R}| \left( d \sqrt{\frac{2}{\pi}}e^{-\frac{c^2}{2d^2}} + c(1 - 2 \Phi(-c/d)) \right).
\end{equation}
\end{prop}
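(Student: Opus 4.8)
The plan is to reduce the statement to our main result from Section~\ref{sec:results} by conditioning on the realized estimates. First I would fix a realization of $\{v_r^i\}_{r \in \cl{R},\, i \in \p}$, so that $\G$ is deterministic, and --- using scale invariance of $\pose$ --- normalize so that $\w(\aopt) = 1$, which forces $v_r \le 1$ for every coverable segment. The common-interest counterpart of $\G$, where every agent uses the true values so that $\ui \equiv \w$, is a coverage game, and its best-response price of anarchy is at least $1/2$: this is a standard consequence of $\w$ in~\eqref{eq:setcov} being a coverage function (at a Nash profile $a$, sum over agents the condition that agent $i$ does not gain by switching to the optimal choice $a^*_i$; the segments covered by a single agent telescope, giving $\w(\aopt) - \w(a) \le \w(a)$). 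I would then invoke our main result, which upgrades this baseline from Nash equilibria of the common-interest game to sink equilibria of the misaligned game $\G$, to obtain a pointwise bound of the form $\pose(\G) \ge \tfrac12 - 2 \sum_{i \in \p} m_i(\G)$, where $m_i(\G)$ denotes agent $i$'s contribution to the misalignment metric.

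The next step is to bound $m_i(\G)$ for this particular game. Both $\w$ and $\ui$ are additive over the covered set $\bigcup_j a_j$, so for any joint action $a$ and any unilateral deviation $a_i'$, the discrepancy $[\ui(a_i', a_{-i}) - \ui(a)] - [\w(a_i', a_{-i}) - \w(a)]$ is a signed sum of the estimation errors $v_r^i - v_r$ over the segments that enter or leave the covered set, hence at most $\sum_{r \in \cl{R}} |v_r^i - v_r|$ in absolute value. Carrying this through the normalization built into the misalignment metric (now with $\w(\aopt) = 1$) gives $m_i(\G) \le \sum_{r \in \cl{R}} |v_r^i - v_r|$.

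Finally I would take the expectation of the pointwise bound, by linearity. Since $v_r^i - v_r \sim \cl{N}[c, (d v_r)^2]$, the expectation of its absolute value is the folded-normal mean
\[
  d v_r \sqrt{\tfrac{2}{\pi}}\, e^{-c^2/(2 d^2 v_r^2)} + c\bigl(1 - 2\Phi(-c/(d v_r))\bigr),
\]
which is increasing in the standard deviation (its $\sigma$-derivative equals $\sqrt{2/\pi}\, e^{-\mu^2/(2\sigma^2)} > 0$) and increasing in $|c|$ (its $\mu$-derivative equals $1 - 2\Phi(-\mu/\sigma) \ge 0$ for $\mu \ge 0$). With $v_r \le 1$ it is therefore at most $d \sqrt{2/\pi}\, e^{-c^2/(2 d^2)} + c\bigl(1 - 2\Phi(-c/d)\bigr)$, and summing over the $|\cl{R}|$ segments gives $\bb{E}\bigl[\sum_{r \in \cl{R}} |v_r^i - v_r|\bigr] \le \beta_\Phi$. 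Summing over the $n$ agents and substituting into the pointwise bound yields $\bb{E}[\pose(\G)] \ge \tfrac12 - 2 n \beta_\Phi = (1 - 4 n \beta_\Phi)/2$; and since $\pose(\G) \ge 0$ always, the right-hand side may be replaced by its maximum with $0$, which is the claim.

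I expect the main obstacle to be the constant bookkeeping in the first two steps rather than any hard inequality: one must check that the coverage game attains the $1/2$ baseline, that the misalignment metric of Section~\ref{sec:results} decomposes per agent and per segment exactly as used here, and that the normalizations line up so that the folded-normal mean emerges with precisely the parameters $(c, d)$ and the aggregate correction is $2 n \beta_\Phi$. The probabilistic content is light --- linearity of expectation on a pointwise guarantee --- and the only genuinely analytic ingredient is the monotonicity of the folded-normal mean in $\sigma$ and $|\mu|$, the short computation indicated above.
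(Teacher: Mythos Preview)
Your overall strategy matches the paper's: bound the realized arithmetic misalignment, invoke the coverage game's $(\lambda_c,\mu_c)=(1,2)$ smoothness, plug into Theorem~\ref{thm:main} to get $\pose(\G)\ge\max((1-4n\beta)/2,0)$ pointwise, and then take expectations (the paper phrases the last step as Jensen applied to the convex map $\beta\mapsto\max((1-4n\beta)/2,0)$).

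Where you diverge is in the misalignment computation. The paper works with the \emph{relative} deviation required by Definition~2, bounding
\[
\frac{|\ui(a)-\w(a)|}{\w(a)}\;\le\;\frac{\sum_{r\in\cl{R}}|v_r^i-v_r|}{\sum_{r\in\cl{R}}v_r}\;\le\;\sum_{r\in\cl{R}}\frac{|v_r^i-v_r|}{v_r},
\]
using $\tfrac{\sum x_r}{\sum y_r}\le\sum\tfrac{x_r}{y_r}$, and then reads each summand as a folded normal with parameters $(c,d^2)$ independent of $v_r$. You instead normalize $\w(\aopt)=1$, bound the \emph{absolute} deviation $|\ui(a)-\w(a)|\le\sum_r|v_r^i-v_r|$, and use monotonicity of the folded-normal mean in $\sigma$ together with $v_r\le 1$. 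Two cautions about your route: (i) Theorem~\ref{thm:main} as stated requires $|\ui(a)-\w(a)|\le\beta\,\w(a)$, not $\le\beta\,\w(\aopt)$; your absolute bound only gives the latter. The \emph{proof} of Theorem~\ref{thm:main} in fact uses only the weaker inequality $|\ui-\w|\le\beta\,\w(\aopt)$, so the argument goes through, but you would need to say so explicitly rather than appeal to the theorem as a black box. (ii) Your scaling step is not innocuous: $\pose$ is scale-invariant, but the additive bias $c$ is not, so after rescaling by $1/\w(\aopt)$ the error $v_r^i-v_r$ has mean $c/\w(\aopt)$, not $c$, and your folded-normal bound would land on $\beta_\Phi$ evaluated at the rescaled bias. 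The paper's divide-by-$v_r$ manoeuvre avoids any global rescaling.
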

\begin{proof}
Proof can be found in the appendix.
\end{proof}
\end{ex}

\begin{ex}[Radio Signalling]
\label{ex:radio}
Consider the situation in which $n$ agents have to communicate to each other through $k$ communication channels, as seen in \cite{gourves2009strong}. However, when more than one agent selects a channel to communicate under, the signals experience interference. This can captured by the parameter $w_{ij} \geq 0$, which dictates the interference between agent $i$ and agent $j$. The system as a whole, would like to minimize the total interference experienced between all the agents. In turn, the system welfare can be defined as
\begin{equation}
\label{eq:welfrad}
    \w(a) = \sum_i \sum_{j: a_j \neq a_i} w_{ij},
\end{equation}
where $a_i \in \{1, \dots k\}$ is the channel that agent $i$ decides to transmit their messages on. However, these interference parameters may not be known to the agents and have to be estimated. For simplicity, we can assume that the agents have a margin of error of $\alpha$ or that agent i's estimate is $w_{ij}^{i} \in [\alpha \cdot w_{ij}, \alpha^{-1} \cdot w_{ij}]$. Again, when agents run a best response algorithm, they are not guaranteed to converge to a Nash equilibrium due to the informational constraints. Then we can characterize the guarantee on the price of sinking in Proposition \ref{prop:ex2}.

\begin{prop}
\label{prop:ex2}
Consider the problem defined above with two channels ($k=1$). The price of sinking is lower bounded by the following expression
\begin{equation}
    \label{eq:misex2}
    \pose(\G) \geq \frac{1}{3 \alpha^2 + (1 - \alpha^2)n}.
\end{equation}
\end{prop}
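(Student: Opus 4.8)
The plan is to read the radio game (with $k=2$ channels) as a perturbed cut game and then invoke the price-of-sinking bound of Section~\ref{sec:results} with the right parameters. If every agent $i$ used the true weight $w_{ij}$ in place of its estimate $w^i_{ij}$, then $i$'s utility $\sum_{j:a_j\ne a_i}w_{ij}$ would be the weight of the pairs incident to $i$ that are separated by the channel assignment, and $\w(a)=\sum_{i,j:a_j\ne a_i}w_{ij}$ would be twice the cut value; that game is an exact potential game with potential $\tfrac12\w(a)$, so for $\alpha=1$ sink equilibria coincide with pure Nash equilibria and the claim reduces to a (loose) price-of-anarchy bound. For $\alpha<1$ the only obstruction to that picture is the estimation noise, which is controlled entirely by $\alpha$: replacing each $w_{ij}$ by some $w^i_{ij}\in[\alpha w_{ij},\alpha^{-1}w_{ij}]$ distorts any comparison between two sums of weights by at most a factor $\alpha^{-2}$ (one $\alpha$ for the channel an agent leaves, one $\alpha^{-1}$ for the channel it joins).

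The one structural fact about sinks I would isolate is this. Write $d_i(a)=\sum_{j:a_j=a_i}w_{ij}$ and $c_i(a)=\sum_{j:a_j\ne a_i}w_{ij}$, so $d_i(a)+c_i(a)=W_i:=\sum_j w_{ij}$, $\sum_i W_i=:W\ge\w(\aopt)$, and $\w(a)=\sum_i c_i(a)$. If at a state $a'$ the action $a'_i$ is a best response for agent $i$ against $a'_{-i}$ (in particular if $a'$ is reached by $i$ performing a best response), then $\sum_{j:a'_j=a'_i}w^i_{ij}\le\sum_{j:a'_j\ne a'_i}w^i_{ij}$; passing to true weights gives $\alpha\,d_i(a')\le\alpha^{-1}c_i(a')$, i.e. $c_i(a')\ge\tfrac{\alpha^2}{1+\alpha^2}W_i$. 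As a warm-up this already bounds the \emph{maximum}-welfare state $\hat a$ of any sink $S$: for every agent $i$, either $\hat a_i$ is a best response (so the inequality above holds) or $i$ strictly prefers to move, in which case the state it moves to lies in $S$ and cannot have larger welfare, forcing $d_i(\hat a)\le c_i(\hat a)$; either way $d_i(\hat a)\le\alpha^{-2}c_i(\hat a)$, and summing over $i$ gives $\max_{a\in S}\w(a)\ge\tfrac{\alpha^2}{1+\alpha^2}\w(\aopt)$. Price of sinking, however, asks about the \emph{expected} welfare, which can sit far below the maximum, and closing that gap is what brings $n$ into the bound.

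For the expectation I would use stationarity of $\sigma$ in the form $\sigma=\tfrac1n\sum_{i\in\p}\mu_i$, where $\mu_i$ is the law of the state immediately after agent $i$ performs a best response starting from $\sigma$. Then $\bb{E}_\sigma[\w]=\tfrac1n\sum_i\bb{E}_{\mu_i}[c_i]+\tfrac1n\sum_i\sum_{j\ne i}\bb{E}_{\mu_j}[c_i]$; at every state in the support of $\mu_i$ the action $a_i$ is a best response, so the first group is at least $\tfrac{\alpha^2}{1+\alpha^2}W$ by the fact above. For the cross terms one tracks that a single best response by agent $j$ changes $c_i$ by at most $w_{ij}$ in true weight, while the comparison to the optimum comes from $\max_c\ui(c,a_{-i})\ge\ui(\aopt_i,a_{-i})\ge\alpha\sum_{j:a_j\ne\aopt_i}w_{ij}$ together with a smoothness-type inequality $\sum_i\sum_{j:a_j\ne\aopt_i}w_{ij}\ge\w(\aopt)-\mu\,\w(a)$ for the cut objective, with $\mu$ a small absolute constant. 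Solving the resulting linear inequality for $\bb{E}_\sigma[\w]$ — equivalently, feeding the constants ($\lambda=1$, the cut-game smoothness $\mu$, and the misalignment distortion $\alpha^2$) into the bound of Section~\ref{sec:results} — should produce $\bb{E}_\sigma[\w]\ge\w(\aopt)\big/\bigl(3\alpha^2+(1-\alpha^2)n\bigr)$, with the $3\alpha^2$ collecting the ($\alpha$-weighted) smoothness constants and the $(1-\alpha^2)n$ being the per-agent misestimation penalty accumulated over the $n$ agents.

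The main obstacle is precisely this last step. Handling the cross terms naively — conceding, say, that every other agent's move can only push agents onto $i$'s channel, so $c_i$ may drop by $w_{ij}$ each time $j$ moves — makes the linear inequality collapse to a vacuous (indeed negative) bound, because those $n-1$ decrements appear on the same side as a $\tfrac{n-1}{n}$-weighted copy of $\bb{E}_\sigma[c_i]$ itself. The content of the misalignment framework is that $j$'s move is a best response, not an adversarial perturbation, so the stationary flow into and out of $i$'s channel is constrained; getting that bookkeeping to yield the sharp denominator $3\alpha^2+(1-\alpha^2)n$, rather than merely $\Theta(n)$ or a vacuous bound, is where the real work lies.
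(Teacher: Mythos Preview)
Your outline has the right shape — treat the radio game as a geometrically misaligned common-interest cut game and feed parameters into the bound of Section~\ref{sec:results} — but the two load-bearing computations are missing, and you explicitly flag one of them as unresolved. The first gap is the constant $3$: it is not a placeholder but exactly the common-interest smoothness parameter $\mu_c$ for the two-channel cut objective, and you never derive it. The paper does this by partitioning the agents into four sets $B,D,N,O$ according to whether $\hat a_i$ and $\aopt_i$ agree, expressing $\w(\hat a)$, $\w(\aopt)$, and $\sum_i\bigl(\w(\hat a)-\w(\aopt_i,\hat a_{-i})\bigr)$ in terms of the cross-weights $w(S_1,S_2)$, and checking that $\w(\aopt)+\sum_i\bigl(\w(\hat a)-\w(\aopt_i,\hat a_{-i})\bigr)\le 3\,\w(\hat a)$, i.e.\ $(\lambda_c,\mu_c)=(1,3)$ in~\eqref{eq:comsmooth}. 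Without that calculation your ``small absolute constant $\mu$'' is undetermined and the target denominator does not follow.

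The second gap is the cross-term analysis you yourself call ``the main obstacle.'' Your decomposition $\sigma=\tfrac1n\sum_i\mu_i$ and the attempt to track $\bb{E}_{\mu_j}[c_i]$ is essentially a special-case rederivation of Theorem~\ref{thm:main}, and it stalls at exactly the step that theorem already packages: once geometric misalignment $\alpha\le\ui(a)/\w(a)\le\alpha^{-1}$ (so $\beta=1-\alpha$) is verified, Lemma~\ref{lem:gsink} plus the best-response inequality $\ui(\abri)\ge\ui(\aopt_i,a_{-i})$ give $\pose\ge\lambda_c\big/\bigl((1-\beta)^2\mu_c+(1-(1-\beta)^2)n\bigr)$ directly, and with $(1-\beta)^2=\alpha^2$ this is~\eqref{eq:misex2}. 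A related point: you model $\ui$ as the \emph{local} cut $\sum_{j:a_j\ne a_i}w^i_{ij}$, but that quantity can be arbitrarily small relative to $\w(a)$ and so fails~\eqref{eq:geomis}. The paper (paralleling Example~\ref{ex:env}) takes $\ui$ to be the full welfare computed with agent $i$'s estimates; the induced best-response sets, and hence the sink equilibria, coincide with yours, but only this formulation lets the misalignment lemma and Theorem~\ref{thm:main} apply.
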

\begin{proof}
Proof can be found in the appendix.
\end{proof}
\end{ex}

\section{Main Results}
\label{sec:results}

The main results of this paper are on providing lower bounds for the price of sinking for a given game. To do this, we first recall the notion of \emph{smoothness} \cite{roughgarden2015intrinsic} as a useful analysis tool for the price of anarchy. In this paper, we use a relaxed version of smoothness to classify a given game. We say that a game is \emph{$(\lambda, \mu)$-smooth} if, for a fixed $\mu \geq \lambda \geq 0$, we have that 
\begin{equation}
\label{eq:smooth}
    \sum_{i \in \p} \left( \ui(a)  - \ui(\aopt_i, a_{-i}) \right) \leq \mu \w(a) - \lambda \w(\aopt)
\end{equation}
for all actions $a \in \ac$. Given these parameters, the efficiency of Nash equilibrium for a given game can easily be determined. This is described in Proposition \ref{prop:lambdamu}, where a proof is included for completeness.
\begin{prop}
\label{prop:lambdamu}
Let $\G$ be a $(\lambda, \mu)$-smooth game. Then the price of anarchy is lower bounded by $\poa(\G) \geq \frac{\lambda}{\mu}$.
\end{prop}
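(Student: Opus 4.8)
The plan is to run the standard smoothness argument of Roughgarden, specialized to the relaxed inequality \eqref{eq:smooth}. First I would fix an arbitrary Nash equilibrium $\ane \in \rm{NE}$ and instantiate the smoothness inequality at the action profile $a = \ane$, obtaining
\[
\sum_{i \in \p}\left(\ui(\ane) - \ui(\aopt_i, \ane_{-i})\right) \leq \mu\,\w(\ane) - \lambda\,\w(\aopt).
\]
Next I would lower-bound the left-hand side by zero: since $\ane$ is a Nash equilibrium, applying the defining inequality with the unilateral deviation $a_i = \aopt_i$ gives $\ui(\ane) \geq \ui(\aopt_i, \ane_{-i})$ for every $i \in \p$, so each summand is nonnegative and hence the whole sum is nonnegative. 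Chaining the two bounds yields $0 \leq \mu\,\w(\ane) - \lambda\,\w(\aopt)$, i.e. $\mu\,\w(\ane) \geq \lambda\,\w(\aopt)$.

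To finish, I would divide through by $\mu\,\w(\aopt)$ to conclude $\w(\ane)/\w(\aopt) \geq \lambda/\mu$. The degenerate cases are handled separately: if $\mu = 0$ then $\lambda = 0$ as well and the claimed bound is vacuous, while if $\w(\aopt) = 0$ then $\w \equiv 0$ since $\w$ is nonnegative. Because the inequality $\w(\ane)/\w(\aopt) \geq \lambda/\mu$ holds for every $\ane \in \rm{NE}$, it holds in particular for the worst-case equilibrium, so $\poa(\G) = \min_{\ane \in \rm{NE}} \w(\ane)/\w(\aopt) \geq \lambda/\mu$, as desired.

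There is no substantive obstacle here — the argument is mechanical once one observes that the Nash condition is exactly what is needed to discard the left-hand side of \eqref{eq:smooth}, which is the whole point of formulating the bound in the form \eqref{eq:smooth}. The only points requiring a word of care are the edge cases just mentioned and the implicit assumption that $\rm{NE}$ is nonempty (otherwise $\poa(\G)$ is undefined); the intended use of this proposition is in settings where existence of Nash equilibrium is guaranteed, such as the common-interest case discussed above, and later in the paper the analogous bound for $\pose$ will require replacing this one-line Nash argument with an averaging argument over a sink strongly connected component.
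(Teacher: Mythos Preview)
Your proposal is correct and follows essentially the same approach as the paper: instantiate the smoothness inequality \eqref{eq:smooth} at an arbitrary Nash equilibrium $\ane$, use the Nash condition to lower-bound the left-hand side by zero, and rearrange. Your treatment of the edge cases $\mu = 0$ and $\w(\aopt) = 0$ is a bit more careful than the paper's own proof, but the substance is identical.
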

\begin{proof}
From applying the definition of Nash equilibrium repeatedly for all $i \in \p$ with respect to the deviation $\aopt$, we have the following inequality
\begin{equation*}
    \sum_i \left( \ui(\ane) - \ui(\aopt_i, \ane_{-i}) \right) \geq 0.
\end{equation*}
Notice that now we can directly substitute the inequality in Eq. \eqref{eq:smooth} to get that
\begin{equation*}
    \mu \w(\ane) - \lambda \w(\aopt) \geq 0.
\end{equation*}
Since $\ane$ is any arbitrary Nash equilibrium, we can rearrange the above equation to get that $\poa(\G) \geq \frac{\lambda}{\mu}$ to show the claim.
\end{proof}

We note that there always exist some $\lambda$ and $\mu$ such that the game is $(\lambda, \mu)$-smooth, as $\lambda \to 0$ and $\mu \to \infty$ will always satisfy the inequality in Eq. \eqref{eq:smooth}. The main analytical benefit of smoothness analysis is that instead of searching across the set of Nash equilibrium directly, we can instead characterize the price of anarchy through a bi-variable optimization problem (over $\lambda$ and $\mu$). This can be done in various game-theoretic contexts \cite{roughgarden2015intrinsic}. The optimization problem is written formally below.

\vspace{1em}
\begin{cor}
The price of anarchy for a given game $\G$ is lower bounded by
\begin{equation*}
    \rm{PoA}(\G) \geq \sup_{\mu \geq \lambda \geq 0} \{\frac{\lambda}{\mu}: \G \text{ is } (\lambda, \mu)-\text{smooth}\}.
\end{equation*}
\end{cor}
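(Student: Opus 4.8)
The plan is to obtain this immediately from Proposition \ref{prop:lambdamu} via a routine supremum argument. First I would fix an arbitrary pair $(\lambda, \mu)$ with $\mu \geq \lambda \geq 0$ and $\mu > 0$ such that $\G$ is $(\lambda,\mu)$-smooth; Proposition \ref{prop:lambdamu} then yields $\poa(\G) \geq \lambda/\mu$. The key observation is that $\poa(\G)$ is a single number that does not depend on the chosen pair, so it is simultaneously an upper bound for every element of the set $\{\lambda/\mu : \mu \geq \lambda \geq 0,\ \G \text{ is }(\lambda,\mu)\text{-smooth}\}$. Hence $\poa(\G)$ dominates the supremum of that set, which is exactly the claimed inequality.

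Second, I would check that the supremum is well posed. The set is nonempty: as noted just after Proposition \ref{prop:lambdamu}, taking $\lambda = 0$ and $\mu$ large makes inequality \eqref{eq:smooth} hold, and the associated ratio is $0$, so the supremum exists and is at least $0$, consistently with $\poa(\G) \geq 0$. The only degenerate case is $\mu = 0$, which forces $\lambda = 0$ and an undefined ratio $0/0$; this is harmless, since any pair contributing a positive value to the supremum must have $\mu > 0$, so restricting the index set to $\mu > 0$ leaves the supremum unchanged.

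I do not expect a genuine obstacle here: all the mathematical content is carried by Proposition \ref{prop:lambdamu}, and what remains is the elementary fact that a quantity bounding every member of a set also bounds its supremum. The only points requiring a little care are phrasing the quantifiers correctly (the bound $\poa(\G) \geq \lambda/\mu$ holds for each admissible pair, and since the left-hand side is independent of $(\lambda,\mu)$ it passes to the supremum) and disposing of the $\mu = 0$ corner case as above.
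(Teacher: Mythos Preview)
Your proposal is correct and matches the paper's treatment: the corollary is stated without proof as an immediate consequence of Proposition~\ref{prop:lambdamu}, and your supremum argument is exactly the routine step that fills this in. The extra care you take with nonemptiness and the $\mu=0$ edge case is more than the paper provides, but entirely appropriate.
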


However, unlike Nash equilibrium, when applying the smoothness inequality directly to the analysis of sink equilibrium, it is not possible to get non-trivial guarantees. For all valid smoothness parameters, it is possible to construct a corresponding game with trivial performance guarantees on sink equilibrium, as stated below.

\begin{prop}
\label{prop:bestworst}
For every $\mu > \lambda \geq 0$, there exists a $(\lambda, \mu)$-smooth game $\G$ with a unique sink equilibrium such that the price of sinking is $\pose(\G) = 0$.
\end{prop}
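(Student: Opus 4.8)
The plan is to exhibit, for each fixed $\mu>\lambda\ge 0$, a small explicit game whose unique sink equilibrium is a best‑response cycle supported entirely on zero‑welfare profiles, with the optimum sitting off the cycle. Two observations shape the construction. First, the sink cannot be a single (pure Nash) profile: at such a profile every player is best responding, so $\sum_i\big(\ui(a)-\ui(\aopt_i,a_{-i})\big)\ge 0$, which can never meet the smoothness requirement $\le\mu\w(a)-\lambda\w(\aopt)=-\lambda\w(\aopt)<0$ at a zero‑welfare profile when $\lambda>0$; hence the sink must be a genuine cycle. Second, a $2\times2$ matching‑pennies game already has a unique four‑state sink cycle, but then every profile lies in the sink and there is no room for a strictly better optimum -- so one adds an extra action per player, and $3\times3$ is enough.

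Concretely I would take $\p=\{1,2\}$ and $\ac_1=\ac_2=\{1,2,3\}$, with the following payoffs. Whenever a player plays action $3$, it receives exactly $\mu-\varepsilon$, where $\varepsilon=(\mu-\lambda)/2>0$. Otherwise the payoffs implement scaled matching pennies on $\{1,2\}^2$: player $1$ earns $\mu$ when its action differs from player $2$'s and $0$ when they agree, while player $2$ earns $\mu$ when its action agrees with player $1$'s and $0$ otherwise; an opponent's action $3$ is read as a $1$ in this rule. Since a best response inside $\{1,2\}^2$ always pays $\mu$, action $3$ is strictly second‑best and never a best response; consequently the best‑response dynamics restricted to $\{1,2\}^2$ form a single closed, strongly connected class whose directed transitions trace the $4$‑cycle $(1,1)\to(2,1)\to(2,2)\to(1,2)\to(1,1)$, every profile using a $3$ is transient, and $\{1,2\}^2$ is the unique sink strongly connected component. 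Finally I set $\w\equiv 0$ on $\{1,2\}^2$ and $\w\equiv 1$ on the remaining five profiles, so $\aopt=(3,3)$ with $\w(\aopt)=1$.

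The unique sink equilibrium $\sigma$ is then the stationary distribution supported on the four‑cycle, so $\bb{E}_{a\sim\sigma}[\w(a)]=0$ and $\pose(\G)=0$ follows immediately. It remains to check $(\lambda,\mu)$‑smoothness \eqref{eq:smooth} at all nine profiles. At $\aopt$ and at the four transient $\w=1$ profiles the right‑hand side is $\mu-\lambda>0$, and a direct substitution gives a left‑hand side of $0$, $\varepsilon$, or $-(\mu-\varepsilon)$; since $\varepsilon=(\mu-\lambda)/2<\mu-\lambda$, these five inequalities hold with room to spare.

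The crux is the four cycle profiles, where $\w=0$ forces the left‑hand side to be $\le-\lambda$. Here the cyclic structure is exactly what makes this feasible: at each cycle profile one player is about to move along the cycle and is currently playing its worst action there (payoff $0$), so its deviation to action $3$ lifts its payoff to $\mu-\varepsilon$ and contributes $-(\mu-\varepsilon)$, while the other player is best responding and loses only $\varepsilon$ by deviating to $3$, contributing $+\varepsilon$; the sum is $2\varepsilon-\mu=-\lambda$, meeting the bound exactly. The main obstacle is precisely to reconcile the two requirements that pull against each other -- closedness of the sink needs action $3$ to never be a best response inside $\{1,2\}^2$, yet smoothness on the zero‑welfare sink needs the players collectively to prefer the deviation toward $\aopt$ (whose coordinates are the $3$s). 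The cyclic sink resolves this, since it always contains a non‑best‑responding player whose deviation to its $\aopt$‑coordinate is a large gain, and setting $\varepsilon=(\mu-\lambda)/2$ with payoff scale $\mu$ makes the arithmetic close simultaneously for every $\mu>\lambda\ge 0$; this last alignment of constants is the one delicate point, and it amounts to the finite check above.
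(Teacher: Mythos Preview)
Your construction is correct and follows essentially the same approach as the paper: a $3\times 3$ two-player game whose unique sink strongly connected component is a matching-pennies $4$-cycle on a $2\times 2$ block with welfare zero, the optimum placed on the extra action, and the free constant chosen as $\varepsilon=(\mu-\lambda)/2$ so that the smoothness inequality closes exactly on the cycle. The only differences from the paper are cosmetic (which $2\times 2$ block carries the cycle and the specific payoff scale), and your added explanation of why a singleton sink and a bare $2\times 2$ game cannot work is a nice piece of motivation that the paper omits.
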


\begin{proof}
Let $\mu > \lambda \geq 0$. Consider the following game $\G$ with two agents with the action sets $\ac_1 = \{e_1, e_2, e_3\}$ and $\ac_2 = \{f_1, f_2, f_3\}$. We define the welfare values $\w(a)$ for each joint action through Table \ref{tab:bestwelf} below.
\begin{table}[h!]
    \centering
    \begin{tabular}{llll}
        & \multicolumn{1}{c}{$f_1$} & \multicolumn{1}{c}{$f_2$} & \multicolumn{1}{c}{$f_3$} \\ \cline{2-4} 
    \multicolumn{1}{l|}{$e_1$} & \multicolumn{1}{l|}{1}                             & \multicolumn{1}{l|}{$(\lambda + \varepsilon)/\mu$} & \multicolumn{1}{l|}{0}    \\ \cline{2-4} 
    \multicolumn{1}{l|}{$e_2$} & \multicolumn{1}{l|}{$(\lambda + \varepsilon)/\mu$} & \multicolumn{1}{l|}{0}                             & \multicolumn{1}{l|}{0}    \\ \cline{2-4} 
    \multicolumn{1}{l|}{$e_3$} & \multicolumn{1}{l|}{0}                             & \multicolumn{1}{l|}{0}                             & \multicolumn{1}{l|}{0}    \\ \cline{2-4} 
    \end{tabular}
\caption{Welfare $\w(a)$ for each joint action $a = (e_i, f_j)$.}
\label{tab:bestwelf}
\end{table}
Similarly, we can define the utility values $\ui(a)$ for each joint action and for each agent in Table \ref{tab:bestutil}.
\begin{table}[h!]
    \centering
    \begin{tabular}{llll}
     & \multicolumn{1}{c}{$f_1$}                & \multicolumn{1}{c}{$f_2$}                     & \multicolumn{1}{c}{$f_3$}                    \\ \cline{2-4} 
    \multicolumn{1}{l|}{$e_1$} & \multicolumn{1}{l|}{$(0, 0)$}            & \multicolumn{1}{l|}{$(0, \varepsilon)$}       & \multicolumn{1}{l|}{$(0, -\varepsilon)$}     \\ \cline{2-4} 
    \multicolumn{1}{l|}{$e_2$} & \multicolumn{1}{l|}{$(\varepsilon, 0)$}  & \multicolumn{1}{l|}{$(\lambda, -2 \lambda)$}  & \multicolumn{1}{l|}{$(-2 \lambda, \lambda)$} \\ \cline{2-4} 
    \multicolumn{1}{l|}{$e_3$} & \multicolumn{1}{l|}{$(-\varepsilon, 0)$} & \multicolumn{1}{l|}{$(- 2 \lambda, \lambda)$} & \multicolumn{1}{l|}{$(\lambda, -2 \lambda)$} \\ \cline{2-4} 
    \end{tabular}
    \caption{Welfare $(\rm{U}_1(a), \rm{U}_2(a))$ for each joint action $a$.}
    \label{tab:bestutil}
\end{table}

We can choose $\varepsilon = (\mu - \lambda)/2 > 0$ such that the optimal joint action is $\aopt = (e_1, f_1)$ with an optimal welfare of  $\w(\aopt) = 1$. Under the best response dynamics, observe that the set $\{(e_2, f_2), (e_3, f_2), (e_2, f_3), (e_3, f_3)\}$ is the unique strongly connected component. It can be verified that each joint action $a$ satisfies the smoothness condition in Eq. \eqref{eq:smooth}. Since the welfare of each action in the unique strongly connected component is $0$, the price of sinking can be upper bounded by $\pose(\G) = \bb{E}_{a \in \sigma} [\w(a)] \leq \sum_{a \in \rm{supp}(\sigma)} \w(a) = 0$ for the unique sink equilibrium $\sigma$.
\end{proof}

This negative result is similar in spirit to the one presented in \cite[Lemma 3.2]{goemans2005sink}. However, we emphasize that the inferior guarantees are more indicative of inefficacy of a direct approach rather than the intrinsic behavior of sink equilibrium. This sentiment is also reflected in \cite{kleinberg2011beyond}, where in certain game settings, it is shown that the quality of sink equilibrium is arbitrarily better than the quality of any mixed equilibrium. In fact, if we consider games with added structure, we can arrive at nontrivial guarantees on sink equilibrium.

Therefore, we consider games in which the deviation from the common interest utility $\ui(a) \neq \w(a)$ is bounded. We encapsulate the extent of the deviation through the constant $\beta \in [0, 1]$, where $\beta = 0$ signifies no deviation from the common interest utility and $\beta = 1$ signifies the maximum deviation. We define this formally below  \footnote{We can generalize the results to instead consider alignment to a potential function. In this way, $\beta$ characterizes the closeness of the game to a potential game. Near-potential games have been studied in \cite{candogan2011flows}.}.

\begin{defn}
A game $\G$ is considered to be \emph{$\beta$-arithmetically misaligned} if
\begin{equation}
\label{eq:arthmis}
    | \ui(a) -  \w(a) | \leq \beta \w(a),
\end{equation}
or \emph{$\beta$-geometrically misaligned} if
\begin{equation}
\label{eq:geomis}
    1 -\beta \leq \frac{\ui(a)}{\w(a)} \leq \frac{1}{1 -\beta},
\end{equation}
is satisfied for all actions $a \in \ac$ and agents $i \in \p$.
\end{defn}
 
We note that when $\beta=0$, under the common interest utility, the sink equilibrium are equivalent to the Nash equilibrium and inherit the price of anarchy guarantees coming from smoothness analysis. Likewise, we will see that the sink equilibrium in near-common interest games with $\beta$ close to $0$ inherit similar guarantees dictated by the common interest utility. This observation is also reflected in a different context in \cite{candogan2013dynamics}. In this vein, let $\lambda_c$ and $\mu_c$ be the parameters that satisfy the smoothness inequality in Eq. \eqref{eq:smooth} for the common interest utility
\begin{equation}
\label{eq:comsmooth}
    \sum_{i \in \p} \left( \w(a)  - \w(\aopt_i, a_{-i}) \right) \leq \mu_c \w(a) - \lambda_c \w(\aopt),
\end{equation}
where we have substituted $\ui(a) \equiv \w(a)$. With this, we can state the main result of the paper.

\begin{thm}
\label{thm:main}
Let $\G$ be a game such that the best response $\br(a)$ is always singular valued. Let $\lambda_c$ and $\mu_c$ satisfy Eq. \eqref{eq:comsmooth} for all $a \in \ac$. If the game is $\beta$-arithmetically misaligned, as in Eq. \eqref{eq:arthmis}, then the price of sinking satisfies
\begin{equation}
\label{eq:poseart}
    \pose(\G) \geq \max (\frac{\lambda_c - 4 \beta n}{\mu_c}, 0).
\end{equation}
If the game is $\beta$-geometrically misaligned, as in Eq. \eqref{eq:geomis}, then the price of sinking satisfies
\begin{equation}
\label{eq:posegeo}
        \pose(\G) \geq \frac{\lambda_c}{(1-\beta)^2 \mu_c + (1 - (1 - \beta)^2)n}.
\end{equation}
\end{thm}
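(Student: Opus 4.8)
The plan is to leverage the defining feature of a sink equilibrium $\sigma$ — that it is a \emph{stationary} distribution of the best-response Markov chain — in order to ``amortize'' the individual best-response improvements over $\sigma$, since Proposition~\ref{prop:bestworst} already shows that a pointwise use of the smoothness inequality cannot yield anything nontrivial. The single most important ingredient is a conservation identity. Because $\br(a)$ is singular-valued, from any state $a$ the chain moves to $(\mathrm{Br}_i(a),a_{-i})$ with probability exactly $1/n$ for each agent $i$, so for any $f:\ac\to\R$ stationarity of $\sigma$ gives $\bb{E}_{a\sim\sigma}\big[\tfrac1n\sum_{i}f(\mathrm{Br}_i(a),a_{-i})\big]=\bb{E}_{a\sim\sigma}[f(a)]$. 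Taking $f=\w$ yields $\bb{E}_{a\sim\sigma}\big[\sum_i \w(\mathrm{Br}_i(a),a_{-i})\big]=n\,\bb{E}_{a\sim\sigma}[\w(a)]$, which plays the role that ``$\sum_i \mathrm{U}_i(\ane)$ appears on both sides of the Nash inequality'' plays in the proof of Proposition~\ref{prop:lambdamu}.

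Next I would record the pointwise best-response inequality: since $\mathrm{Br}_i(a)$ is the \emph{unique} maximizer of $\mathrm{U}_i(\cdot,a_{-i})$, comparing against the deviation $\aopt_i$ gives $\mathrm{U}_i(\mathrm{Br}_i(a),a_{-i}) \ge \mathrm{U}_i(\aopt_i,a_{-i})$ for every $i$ and every $a\in\ac$ — this is the only place singular-valuedness is used. I then translate both sides into the global welfare $\w$ via the misalignment hypothesis: in the arithmetic case \eqref{eq:arthmis} this gives $(1+\beta)\,\w(\mathrm{Br}_i(a),a_{-i}) \ge (1-\beta)\,\w(\aopt_i,a_{-i})$, while in the geometric case \eqref{eq:geomis} it gives $\w(\mathrm{Br}_i(a),a_{-i}) \ge (1-\beta)^2\,\w(\aopt_i,a_{-i})$; equivalently, $\mathrm{Br}_i(a)$ is an approximate best response \emph{for} $\w$. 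Summing over $i$ and invoking the common-interest smoothness \eqref{eq:comsmooth}, rearranged as $\sum_i \w(\aopt_i,a_{-i}) \ge (n-\mu_c)\w(a) + \lambda_c \w(\aopt)$, bounds the right-hand side below by the corresponding multiple of $(n-\mu_c)\w(a)+\lambda_c\w(\aopt)$.

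The final step is to take the expectation over $a\sim\sigma$: the conservation identity collapses the left-hand side to $n\,\bb{E}_\sigma[\w(a)]$, the smoothness term becomes $(n-\mu_c)\,\bb{E}_\sigma[\w(a)] + \lambda_c\w(\aopt)$, and solving the resulting scalar inequality for the ratio $\bb{E}_\sigma[\w(a)]/\w(\aopt)$ produces a lower bound on $\pose(\G)$ of the stated form in each case; the $\max(\cdot,0)$ in \eqref{eq:poseart} is automatic since $\pose(\G)\ge 0$ always, and one uses $\lambda_c\le\mu_c$ to put the arithmetic estimate in the clean form \eqref{eq:poseart}.

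The step I expect to be the main obstacle — beyond recognizing that one must average over $\sigma$ rather than argue pointwise — is controlling the misalignment cleanly: a careless substitution bleeds factors of $(1\pm\beta)$ onto \emph{both} the $\w(a)$-terms and the $\w(\aopt)$-term, and recovering the sharper constants of \eqref{eq:poseart}–\eqref{eq:posegeo} means applying the misalignment inequality only exactly where it is needed. I would also want to check the degenerate cases ($\w(\aopt)=0$, and $\beta$ near $1$ where a denominator could in principle degenerate) and to observe that the conservation identity needs only that $\sigma$ be stationary — not that $\mathrm{supp}(\sigma)$ is a sink component — so the bound in fact holds for every stationary distribution of the chain.
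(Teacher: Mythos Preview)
Your plan is essentially the paper's: the conservation identity you state is exactly the paper's Lemma~\ref{lem:gsink}, and the chain ``conservation $\Rightarrow$ best-response inequality $\Rightarrow$ misalignment $\Rightarrow$ common-interest smoothness $\Rightarrow$ average over $\sigma$'' is precisely the paper's route. For the geometric case your argument is line-for-line the paper's, yielding \eqref{eq:posegeo}.

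The one substantive discrepancy is in the arithmetic case. You keep the misalignment in multiplicative form, obtaining $(1+\beta)\w(\mathrm{Br}_i(a),a_{-i})\ge(1-\beta)\w(\aopt_i,a_{-i})$ and, after averaging, a bound of the shape $\tfrac{(1-\beta)\lambda_c}{2\beta n+(1-\beta)\mu_c}$. The paper instead first loosens the arithmetic condition \eqref{eq:arthmis} to an \emph{additive} error, using $\w(a)\le\w(\aopt)$ to write $\w(a)-\beta\w(\aopt)\le\ui(a)\le\w(a)+\beta\w(\aopt)$, and applies this both when passing from $\w$ to $\ui$ and when passing back; each round contributes $2\beta\w(\aopt)$ per agent, producing the $4\beta n$ in \eqref{eq:poseart}. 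Your claim that ``$\lambda_c\le\mu_c$ puts the arithmetic estimate in the clean form \eqref{eq:poseart}'' is not correct: the two bounds are incomparable (e.g.\ with $\lambda_c=\mu_c$, $n=1$, $\beta=0.9$ your bound is $\approx 0.85$ while the paper's is $\approx 0.96$), so to recover \emph{exactly} \eqref{eq:poseart} you must insert the additive loosening step. This is easy to fix and does not affect the overall strategy.

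A minor remark: singular-valuedness is what makes the transition probabilities exactly $1/n$ and hence powers the conservation identity; the inequality $\ui(\mathrm{Br}_i(a),a_{-i})\ge\ui(\aopt_i,a_{-i})$ holds for \emph{any} element of the best-response set, so that step does not require uniqueness.
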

\begin{proof}
First, we introduce the following lemma to characterize the sink equilibrium in an alternative fashion.
\begin{lemma}
\label{lem:gsink}
Let $\G$ be a game such that $\br(a)$ is always singular valued and let $\sigma \in \rm{SE}$ be any sink equilibrium of the game. For any function $g: \ac \to \bb{R}$, the following equality must hold
\begin{equation}
    \bb{E}_{a \sim \sigma} [\sum_{i \in \p} g(a) - g(\br(a), a_{-i})] = 0.
\end{equation}
\end{lemma}
\begin{proof}
Let $\sigma$ be a sink equilibrium of the game $\G$. Since the sink equilibrium is a stationary distribution under the dynamics outlined in Eq. \eqref{eq:probrandbest}, we have that $p_{\sigma}(a) = \sum_{\bar{a} \in \ac} \rm{Pr}(a | \bar{a}) p_{\sigma}(\bar{a})$. Under this statement, we have the series of equalities below
\begin{align*}
    n \sum_a p_{\sigma}(a) g(a) &= n \sum_a \sum_{\bar{a}} \rm{Pr}(a | \bar{a}) p_{\sigma}(\bar{a}) g(a) \\
    \bb{E}_{a \sim \sigma} [\sum_{i \in \p} g(a)] &= \bb{E}_{\bar{a} \sim \sigma}[n \sum_{a} \rm{Pr}(a | \bar{a}) g(a)] \\
    &= \bb{E}_{a \sim \sigma}[\sum_{i \in \p} g(\br(a), a_{-i})],
\end{align*}
where we change the the naming convention from $\bar{a}$ to $a$ in the last line. Rearranging the terms and using linearity of expectation gives us the claim.
\end{proof}
\begin{proof}[Proof of Arithmetic]
For ease of notation, let $\abri = (\br(a), a_{-i})$. We can apply Lemma \ref{lem:gsink} with respect to the welfare function $\w$ to get
\begin{equation}
\label{eq:breqzero}
        \bb{E}_{a \sim \sigma} [\sum_{i \in \p} \w(a) - \w(\abri)] = 0.
\end{equation}

Since we assume the game is $\beta$-arithmetically misaligned, we have that $\ui(a) \geq (1-\beta)\w(a) \geq \w(a) - \beta \w(\aopt)$. Likewise, we can also bound $\ui(\abri) \leq \w(\abri) + \beta \w(\aopt)$. We can substitute these two inequalities in Eq. \eqref{eq:breqzero} to get
\begin{equation*}
    \w(a) - \w(\abri) \leq  \ui(a) - \ui(\abri) +  2\beta \w(\aopt)
\end{equation*}
We can apply this inequality to Eq. \eqref{eq:breqzero} for
\begin{equation*}
        \bb{E}_{a \sim \sigma} [2\beta n \w(\aopt) + \sum_{i \in \p} \ui(a) - \ui(\abri)] \geq 0.
\end{equation*}
Further, observe that since $\ui(\abri) \geq \ui(\aopt_i, a_{-i})$ from the definition of a best response, we can replace $\ui(\abri)$ $\ui(\aopt_i, a_{-i})$. We can utilize the $\beta$-misalignment and substitute for the utility functions similarly as before.
\begin{equation*}
    \bb{E}_{a \sim \sigma} [4 \beta n \w(\aopt) + \sum_{i \in \p} \w(a) - \w(\aopt_i, a_{-i})] \geq 0.
\end{equation*}
Applying the definition of $\lambda_c$ and $\mu_c$ as in Eq. \eqref{eq:comsmooth} results in the final inequality.
\begin{equation*}
    \bb{E}_{a \sim \sigma} [4 \beta n \w(\aopt) + \mu_c \w(a) - \lambda_c \w(\aopt)] \geq 0.
\end{equation*}
Notice that the above inequality holds for any arbitrary sink equilibrium $\sigma$. Thus rearranging terms and using linearity of expectation gives us the price of sinking guarantee in Eq. \eqref{eq:poseart}.
\end{proof}

\begin{proof}[Proof of Geometric]
We can apply Lemma \ref{lem:gsink} with respect to the welfare function $\w$ to get Eq. \eqref{eq:breqzero}. For ease of notation, let $\bar{\beta} = 1 - \beta$. We can successively apply the geometric misalignment property in Eq. \eqref{eq:geomis}, as well as using the fact that $\ui(\abri) \geq \ui(\aopt_i, a_{-i})$, to arrive at the following set of inequalities.
\begin{equation*}
    W(\abri) \geq \bar{\beta} \ui(\abri) \geq \bar{\beta} \ui(\aopt_i, a_{-i}) \geq \bar{\beta}^2 \w(\aopt_i, a_{-i})
\end{equation*}
Substituting these inequalities back into Eq. \eqref{eq:breqzero} gives
\begin{equation*}
    \bb{E}_{a \sim \sigma}[\sum_i \w(a) - \bar{\beta}^2 \w(\aopt_i, a_{-i})] \geq 0.
\end{equation*}
Now we can substitute the definitions of $\lambda_c$ and $\mu_c$ in Eq. \eqref{eq:comsmooth} to a portion of the terms and simplify to get
\begin{equation*}
    \bb{E}_{a \sim \sigma}[n (1 - \bar{\beta}^2)\w(a) + \bar{\beta}^2 \left( \mu_c \w(a) - \lambda_c \w(\aopt) \right)] \geq 0.
\end{equation*}
Notice that the above inequality holds for any arbitrary sink equilibrium $\sigma$. Thus rearranging terms and using linearity of expectation gives us the price of sinking guarantee in Eq. \eqref{eq:posegeo}.
\end{proof}
Thus we have shown the bounds for both geometric and arithmetic misalignment cases.
\end{proof}

We see that when the utility functions are close to the common interest utility design with $\beta \sim 0$, the price of sinking guarantees match the guarantees for the common interest utility. We note that while our approach allows us to get nontrivial guarantees on the sink equilibrium, we still suffer from the degradation of the guarantee as the number of agents $n \to \infty$ increases arbitrarily. However, we assume worst case deviations (see $\w(\abri) \geq \bar{\beta}^2 \w(\aopt_i, a_{-i})$ in the proof of the geometric misalignment) for all actions in the game, which is not true for most natural games and produces a conservative bound. Thus the focus of future work will be to address this concern to get tighter guarantees. We can also get alternative guarantees if we consider sink induced by better responses. This is discussed in the next section.

\section{Sink Eq. From Better Responses}
\label{sec:better}
In this section we consider sink equilibrium that are induced by a better (rather than best) response process. In contrast to the best response set in Eq. \eqref{eq:brset}, we consider the \emph{better response set} defined as
\begin{equation}
    \rm{br}_i(\bar{a}) = \{a_i \in \ac_i: \ui(a_i, \bar{a}_{-i}) \geq \ui(\bar{a}) \}
\end{equation}
for a given agent $i$. The better response process is then defined by a random walk, similar to Eq. \eqref{eq:probrandbest} as 
\begin{equation}
\label{eq:probrandbetter}
    \rm{Pr}(\tilde{a} | a) = \begin{cases} 
    \frac{1}{n \cdot \card{\rm{br}_i(a)}} &\text{if } \tilde{a} \in \rm{br}_i(a) \text{ for some } i  \\
    0 & \text{otherwise}.
    \end{cases}
\end{equation}
The sink equilibrium are similarly defined for these dynamics. If we consider sink equilibrium that are induced by better responses, it is possible to get positive guarantees on the behavior of sink equilibrium. More specifically, we show that there always exists a joint action in the support of the sink equilibrium that has similar guarantees to the Nash equilibrium.

\begin{prop}
\label{prop:better}
Let $\G$ be $(\lambda, \mu)$-smooth. Every sink equilibrium in $\G$ induced by better responses contains a joint action $\tilde{a} \in \rm{supp}(\sigma)$ in its support such that $\w(\tilde{a}) \geq \frac{\lambda}{\mu} \w(\aopt)$.
\end{prop}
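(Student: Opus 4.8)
The plan is to argue by contradiction, using $(\lambda,\mu)$-smoothness to drive a monotone ``progress toward $\aopt$'' counter along transitions that, by the sink property, cannot leave the support. Write $S = \rm{supp}(\sigma)$ for the sink strongly connected component underlying the sink equilibrium $\sigma$ of the better response chain in Eq. \eqref{eq:probrandbetter}. The single structural fact I need is that $S$ absorbs all of its transitions: if $a \in S$ and $\rm{Pr}(\tilde a \mid a) > 0$, then $\tilde a \in S$, because a single positive-probability step is in particular a positive-probability path and $S$ admits no path leaving it. Consequently, for any $a \in S$ and agent $i$ with $\ui(\aopt_i, a_{-i}) \geq \ui(a)$, we have $\aopt_i \in \rm{br}_i(a)$, so $\rm{Pr}\big((\aopt_i, a_{-i}) \mid a\big) > 0$ and hence $(\aopt_i, a_{-i}) \in S$.

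Next I would assume, toward a contradiction, that $\w(a) < \frac{\lambda}{\mu}\,\w(\aopt)$ for every $a \in S$. Fixing such an $a$ and applying the smoothness inequality \eqref{eq:smooth} with the optimal deviations gives $\sum_{i \in \p}\big(\ui(a) - \ui(\aopt_i, a_{-i})\big) \leq \mu\,\w(a) - \lambda\,\w(\aopt) < 0$, so some agent $i$ has $\ui(\aopt_i, a_{-i}) > \ui(a)$ --- a strict better response toward $\aopt_i$. By the previous paragraph $(\aopt_i, a_{-i}) \in S$, and the strict inequality forces $a_i \neq \aopt_i$. Therefore the counter $\phi(a) = \card{\{ j \in \p : a_j = \aopt_j \}}$, the number of coordinates on which $a$ already agrees with $\aopt$, strictly increases along this one-step transition (by exactly one, since only coordinate $i$ changes).

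To finish, pick $a^\star \in S$ maximizing $\phi$ over the finite set $S$. The previous paragraph produces a state of $S$ reachable from $a^\star$ with strictly larger $\phi$, contradicting maximality. Hence the standing assumption fails, and some $\tilde a \in S = \rm{supp}(\sigma)$ satisfies $\w(\tilde a) \geq \frac{\lambda}{\mu}\,\w(\aopt)$, as claimed.

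The crux --- essentially the only non-routine step --- is spotting the correct monotone potential: the agreement count $\phi$ with the optimal profile. The rest is bookkeeping: when smoothness declines to certify the target welfare bound at some $a \in S$, it instead certifies that an agent strictly prefers its $\aopt$-coordinate, and the absorbing property of the sink component keeps the induced move inside $S$. I would also note that the better response hypothesis is used essentially here: under best response dynamics the improving move at $a$ need not point toward $\aopt_i$, so $\phi$ need not be monotone, consistent with Proposition \ref{prop:bestworst}, which rules out any analogous positive guarantee for best response sinks in general.
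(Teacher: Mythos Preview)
Your proof is correct and follows essentially the same route as the paper's: both exploit that whenever the desired bound fails at some $a\in S$, an agent $i$ has $\aopt_i$ as a (strict) better response, so the move to $(\aopt_i,a_{-i})$ stays in $S$ and strictly increases the number of coordinates agreeing with $\aopt$. The paper presents this as a constructive induction toward an action $a^*$ with $\ui(a^*)\geq \ui(\aopt_i,a^*_{-i})$ for all $i$, while you package it as a contradiction via the potential $\phi$, but the monotone quantity driving both arguments is identical.
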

\begin{proof}[Proof of Proposition \ref{prop:better}]
We show that $\w(a) \geq \frac{\lambda}{\mu} \w(\aopt)$ for some $a \in \rm{supp}(\sigma)$ in the sink induced by better responses. We first claim that for any sink $\sigma$, there exists a joint action $a \in \rm{supp}(\sigma)$ such that for all $i \in \p$,
\begin{equation}
\label{eq:brsinksmooth}
    \ui(a) - \ui(\aopt_i, a_{-i}) \geq 0.
\end{equation}
Consider an arbitrary action $a \in \rm{supp}(\sigma)$ in which the condition does not hold true. Let $i$ be the smallest number such that $a$ does not satisfy Eq. \eqref{eq:brsinksmooth} for agent $i$. Then the action $\hat{a} = (\aopt_i, a_{-i})$ is a better response to $a$ and therefore $\hat{a} \in \rm{supp}(\sigma)$ is in the support of $\sigma$ as well. Note that $\hat{a}$ also satisfies Eq. \eqref{eq:brsinksmooth} for agent $i$. By induction, we can then derive an action $a^* \in \rm{supp}(\sigma)$ such that Eq. \eqref{eq:brsinksmooth} is satisfied for all $i$. By the smoothness inequality in Eq. \eqref{eq:smooth}, we have that
\begin{equation*}
    \mu \w(a^*) - \lambda \w(\aopt) \geq \sum_i \ui(a^*)  - \ui(\aopt_i, a^*_{-i}) \geq 0.
\end{equation*}
Therefore for some $a^* \in \rm{supp}(\sigma)$, the efficiency is lower bounded by $\w(a^*) \geq \frac{\lambda}{\mu} \w(\aopt)$.
\end{proof}

\section{Conclusion}
\label{sec:conc}
The main focus of this paper is on studying the limiting behavior of self-interested agents when it is not guaranteed that they converge to Nash equilibrium. As such, we consider sink equilibrium as our main solution concept, which is naturally defined as a limiting distribution of the best response process. To quantify performance of the sink equilibrium of a game, we examine the price of sinking as an appropriate approximation metric. For Nash equilibrium, smoothness has been classically used to derive guarantees for price of anarchy, the respective approximation metric. However, we show that smoothness is not enough to guarantee anything for the price of sinking. But by using a novel method of misalignment parametrization, this work provides nontrivial guarantees on the sink equilibrium. We implement our guarantees into two natural application settings and provide bounds on the resulting sink equilibrium. Future work consists of considering additional settings where sink equilibrium may appear and providing tighter approximation guarantees.

\bibliographystyle{ieeetr}
\bibliography{references.bib}

\begin{thebibliography}{10}

\bibitem{kubisch2003distributed}
M.~Kubisch, H.~Karl, A.~Wolisz, L.~C. Zhong, and J.~Rabaey, ``Distributed
  algorithms for transmission power control in wireless sensor networks,'' in
  {\em 2003 IEEE Wireless Communications and Networking, 2003. WCNC 2003.},
  vol.~1, pp.~558--563, IEEE, 2003.

\bibitem{babaoglu2006design}
O.~Babaoglu, G.~Canright, A.~Deutsch, G.~A.~D. Caro, F.~Ducatelle, L.~M.
  Gambardella, N.~Ganguly, M.~Jelasity, R.~Montemanni, A.~Montresor, {\em
  et~al.}, ``Design patterns from biology for distributed computing,'' {\em ACM
  Transactions on Autonomous and Adaptive Systems (TAAS)}, vol.~1, no.~1,
  pp.~26--66, 2006.

\bibitem{stolpe2016internet}
M.~Stolpe, ``The internet of things: Opportunities and challenges for
  distributed data analysis,'' {\em Acm Sigkdd Explorations Newsletter},
  vol.~18, no.~1, pp.~15--34, 2016.

\bibitem{roughgarden2005selfish}
T.~Roughgarden, {\em Selfish routing and the price of anarchy}.
\newblock MIT press, 2005.

\bibitem{atzeni2013noncooperative}
I.~Atzeni, L.~G. Ord{\'o}{\~n}ez, G.~Scutari, D.~P. Palomar, and J.~R.
  Fonollosa, ``Noncooperative and cooperative optimization of distributed
  energy generation and storage in the demand-side of the smart grid,'' {\em
  IEEE transactions on signal processing}, vol.~61, no.~10, pp.~2454--2472,
  2013.

\bibitem{monderer1996potential}
D.~Monderer and L.~S. Shapley, ``Potential games,'' {\em Games and economic
  behavior}, vol.~14, no.~1, pp.~124--143, 1996.

\bibitem{marden2007regret}
J.~R. Marden, G.~Arslan, and J.~S. Shamma, ``Regret based dynamics: convergence
  in weakly acyclic games,'' in {\em Proceedings of the 6th international joint
  conference on Autonomous agents and multiagent systems}, (New York, NY, USA),
  pp.~1--8, Association for Computing Machinery, 2007.

\bibitem{swenson2018best}
B.~Swenson, R.~Murray, and S.~Kar, ``On best-response dynamics in potential
  games,'' {\em SIAM Journal on Control and Optimization}, vol.~56, no.~4,
  pp.~2734--2767, 2018.

\bibitem{goemans2005sink}
M.~Goemans, V.~Mirrokni, and A.~Vetta, ``Sink equilibria and convergence,'' in
  {\em 46th Annual IEEE Symposium on Foundations of Computer Science
  (FOCS'05)}, (USA), pp.~142--151, IEEE, IEEE, 2005.

\bibitem{hart2000simple}
S.~Hart and A.~Mas-Colell, ``A simple adaptive procedure leading to correlated
  equilibrium,'' {\em Econometrica}, vol.~68, no.~5, pp.~1127--1150, 2000.

\bibitem{kleinberg2011beyond}
R.~D. Kleinberg, K.~Ligett, G.~Piliouras, and {\'E}.~Tardos, ``Beyond the nash
  equilibrium barrier.,'' in {\em ICS}, (USA), pp.~125--140, ICS, 2011.

\bibitem{basu1991strategy}
K.~Basu and J.~W. Weibull, ``Strategy subsets closed under rational behavior,''
  {\em Economics Letters}, vol.~36, no.~2, pp.~141--146, 1991.

\bibitem{fabrikant2008complexity}
A.~Fabrikant and C.~H. Papadimitriou, ``The complexity of game dynamics: Bgp
  oscillations, sink equilibria, and beyond.,'' in {\em SODA}, vol.~8, (USA),
  pp.~844--853, Citeseer, Society for Industrial and Applied Mathematics, 2008.

\bibitem{papadimitriou2019game}
C.~Papadimitriou and G.~Piliouras, ``Game dynamics as the meaning of a game,''
  {\em ACM SIGecom Exchanges}, vol.~16, no.~2, pp.~53--63, 2019.

\bibitem{yan2021policy}
R.~Yan, X.~Duan, Z.~Shi, Y.~Zhong, J.~R. Marden, and F.~Bullo, ``Policy
  evaluation and seeking for multi-agent reinforcement learning via best
  response,'' {\em IEEE Transactions on Automatic Control}, vol.~67, no.~4,
  pp.~1898--1913, 2021.

\bibitem{roughgarden2015intrinsic}
T.~Roughgarden, ``Intrinsic robustness of the price of anarchy,'' {\em Journal
  of the ACM (JACM)}, vol.~62, no.~5, pp.~1--42, 2015.

\bibitem{paccagnan2021utility}
D.~Paccagnan and J.~R. Marden, ``Utility design for distributed resource
  allocation—part ii: Applications to submodular, covering, and supermodular
  problems,'' {\em IEEE Transactions on Automatic Control}, vol.~67, no.~2,
  pp.~618--632, 2021.

\bibitem{notomista2019slothbot}
G.~Notomista, Y.~Emam, and M.~Egerstedt, ``The slothbot: A novel design for a
  wire-traversing robot,'' {\em IEEE Robotics and Automation Letters}, vol.~4,
  no.~2, pp.~1993--1998, 2019.

\bibitem{chvatal1979greedy}
V.~Chvatal, ``A greedy heuristic for the set-covering problem,'' {\em
  Mathematics of operations research}, vol.~4, no.~3, pp.~233--235, 1979.

\bibitem{gourves2009strong}
L.~Gourves and J.~Monnot, ``On strong equilibria in the max cut game,'' in {\em
  International Workshop on Internet and Network Economics}, pp.~608--615,
  Springer, 2009.

\bibitem{candogan2011flows}
O.~Candogan, I.~Menache, A.~Ozdaglar, and P.~A. Parrilo, ``Flows and
  decompositions of games: Harmonic and potential games,'' {\em Mathematics of
  Operations Research}, vol.~36, no.~3, pp.~474--503, 2011.

\bibitem{candogan2013dynamics}
O.~Candogan, A.~Ozdaglar, and P.~A. Parrilo, ``Dynamics in near-potential
  games,'' {\em Games and Economic Behavior}, vol.~82, pp.~66--90, 2013.

\end{thebibliography}

\appendix
\label{sec:appendix}

\begin{proof}[Proof of Proposition \ref{prop:ex1}]
We first characterize the expected misalignment for the setting in Example \ref{ex:env} in the following lemma.
\begin{lemma}
\label{lem:misex1}
Consider the problem outlined in Example \ref{ex:env} with $\beta_\Phi$ defined in Eq. \eqref{eq:misex1}. The expected arithmetic misalignment of the welfare function in Eq. \eqref{eq:setcov} is $\beta_\Phi$.
\end{lemma}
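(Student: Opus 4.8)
The plan is to evaluate the expected misalignment by decomposing it over the segments of $\cl{R}$ and invoking linearity of expectation, so that the resulting expression is an exact equality rather than merely an upper bound. First I would rewrite the deviation of each utility from the welfare for a fixed realization of the estimates. Since $\w(a) = \sum_{r \in \cup_j a_j} v_r$ and $\ui(a) = \sum_{r \in \cup_j a_j} v^i_r$, the per-agent deviation is $\ui(a) - \w(a) = \sum_{r \in \cup_j a_j} (v^i_r - v_r)$, which depends on the realization only through the per-segment estimation errors $v^i_r - v_r$. The arithmetic misalignment of the welfare function is then the total absolute estimation error $\sum_{r \in \cl{R}} |v^i_r - v_r|$ accrued over the whole region (attained by the maximal coverage action), and because this is a sum over segments its expectation splits term by term.

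Next I would compute the contribution of a single segment. From the modelling assumption, each per-segment error $v^i_r - v_r$ is Gaussian with mean equal to the bias $c$ and standard deviation $d$, so $|v^i_r - v_r|$ follows a folded normal law. I would therefore evaluate $\bb{E}[|v^i_r - v_r|]$ directly, splitting the defining integral $\int |x|\, \phi(x)\, dx$ at the origin into its positive and negative parts; carrying out the two Gaussian integrals yields the folded-normal mean $d\sqrt{2/\pi}\, e^{-c^2/(2d^2)} + c\,(1 - 2\Phi(-c/d))$, where the first summand comes from the Gaussian density and the second from the nonzero bias $c$ together with the standard normal CDF $\Phi$. This is the per-segment expected absolute deviation.

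Finally I would assemble the global quantity. By linearity of expectation, $\bb{E}\big[\sum_{r \in \cl{R}} |v^i_r - v_r|\big] = \sum_{r \in \cl{R}} \bb{E}[|v^i_r - v_r|] = |\cl{R}|\big(d\sqrt{2/\pi}\, e^{-c^2/(2d^2)} + c(1 - 2\Phi(-c/d))\big) = \beta_\Phi$, which is exactly Eq.~\eqref{eq:misex1}. The crucial point for obtaining the stated \emph{equality} — as opposed to the inequality one gets by bounding $|\sum_r (v^i_r - v_r)|$ through the triangle inequality — is that the misalignment is accounted for as the expected \emph{sum} of the per-segment absolute errors, for which linearity of expectation is an exact identity, and that the $|\cl{R}|$ segments are identically distributed so each contributes the same folded-normal mean. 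I expect the main obstacle to be the folded-normal mean computation itself, in particular correctly handling the bias term $c(1 - 2\Phi(-c/d))$ and keeping track of $\Phi$; once that integral is evaluated, summing over the $|\cl{R}|$ segments closes the proof.
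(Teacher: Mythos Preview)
Your proposal has a genuine gap: you have dropped the normalization by $\w(a)$ that defines arithmetic misalignment. By Eq.~\eqref{eq:arthmis}, $\beta$ bounds the \emph{ratio} $|\ui(a)-\w(a)|/\w(a)$, not the absolute difference $|\ui(a)-\w(a)|$. So the quantity to control is $\big|\sum_{r\in\cup_j a_j}(v_r^i-v_r)\big|/\sum_{r\in\cup_j a_j} v_r$, and the sum $\sum_{r\in\cl{R}}|v_r^i-v_r|$ you compute is not the misalignment. This omission cascades into your distributional claim: from the model, $v_r^i-v_r\sim\cl{N}(c,(d\,v_r)^2)$, so the per-segment error has standard deviation $d\,v_r$, not $d$; the constant-variance folded normal with parameters $(c,d^2)$ only appears after you divide by $v_r$, which is precisely the normalization step you skipped. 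With the correct variance, $\bb{E}[|v_r^i-v_r|]$ depends on $v_r$ and summing over $\cl{R}$ does not give $|\cl{R}|$ identical terms.

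The paper's argument proceeds differently: it keeps the ratio, bounds the numerator by the triangle inequality and enlarges the sum to all of $\cl{R}$, then uses a sum-of-ratios inequality to reduce to $\sum_r \bb{E}\big[|(v_r^i-v_r)/v_r|\big]$; each normalized term is a folded normal with location $c$ and scale $d$ (well, the paper writes it this way), yielding $\beta_\Phi$ as an \emph{upper bound} on the expected misalignment. Your claim of an exact equality via linearity of expectation cannot work once the ratio structure is restored, because the denominator $\w(a)$ is random across actions and couples the terms; an inequality step is unavoidable. To repair your sketch you would need to reinstate the $\w(a)$ normalization, recognize the correct per-segment scale $d\,v_r$, and then argue an upper bound rather than an identity.
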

\begin{proof}
We derive a bound for $\frac{|\ui(a) - \w(a)|}{\w(a)}$ based on the parameters given in Example \ref{ex:env}. From the equations defining the utility and welfare,
\begin{align*}
    \frac{|\ui(a) - \w(a)|}{\w(a)} &= \frac{\left| \sum_{r \in \cup_i a_i} v_r^i - \sum_{r \in \cup_i a_i} v_r \right|}{\sum_{r \in \cup_i a_i} v_r} \\
    &\leq \frac{\sum_{r} |v_r^i - v_r|}{\sum_{r} v_r},
\end{align*}
where we use triangle inequality and $\cup_i a_i \subseteq \cl{R}$ to get the inequality on the right hand side. Observe that $\frac{\sum_i x_i}{\sum_i y_i} \leq \sum_i \frac{x_i}{y_i}$ for nonnegative $x_i$ and $y_i$. This fact coupled with linearity of expectation and $v_r \geq 0$ gives the inequality
\begin{equation*}
    \bb{E}\left[\frac{|\ui(a) - \w(a)|}{\w(a)} \right] \leq \sum_r \bb{E} \left[ | \frac{v_r^i - v_r}{v_r} | \right].
\end{equation*}
We assume that $v_r^i \sim \cl{N}[v_r + c, (d \cdot v_r)^2]$ is drawn from a normal distribution. Thus the expectation $\bb{E} \left[ | \frac{v_r^i - v_r}{v_r} | \right] = \bb{E}[\cl{N}_f(c, d^2)]$, where $\cl{N}_f$ is a folded normal distribution with mean $c$ and variance $d^2$. This holds true for any $r \in \cl{R}$, so using the equality
\begin{equation*}
\bb{E} \left[ | \frac{v_r^i - v_r}{v_r} | \right] = \left( d \sqrt{\frac{2}{\pi}}e^{-\frac{c^2}{2d^2}} + c(1 - 2 \Phi(-c/d)) \right)   
\end{equation*}
results in the bound given in Eq. \eqref{eq:misex1}.
\end{proof}

We remark that the covering problem in Example \ref{ex:env} is a submodular game \cite{paccagnan2021utility}. Under the common interest utility, the constants $\lambda_c = 1$ and $\mu_c = 2$ satisfy the inequality in Eq. \eqref{eq:comsmooth} for submodular games \cite[Example 2.6]{roughgarden2015intrinsic}. We can directly apply the guarantee given in Eq. \eqref{eq:poseart} for the arithmetic misalignment for
\begin{align*}
    \bb{E}[\pose(\G)] &\geq \bb{E}[\max(\frac{\lambda_c - 4 n \beta }{\mu_c}, 0)] \\
    &\geq \max(\frac{\lambda_c - 4 n \bb{E}[\beta] }{\mu_c}, 0),
\end{align*}
applying Jensen's inequality. Note that $\bb{E}[\beta] = \beta_\Phi$ is given in Eq. \eqref{eq:misex1} and substituting $\lambda_c = 1$ and $\mu_c = 2$ gives the lower bound.
\end{proof}

\begin{proof}[Proof of Proposition \ref{prop:ex2}]
We first characterize the expected misalignment for the setting in Example \ref{ex:radio} in the following lemma.
\begin{lemma}
\label{lem:misex2}
Consider the problem outline in Example \ref{ex:radio} with $w_{ij}^{i}/w_{ij} \in [\alpha, \alpha^{-1}]$ for all interference weight estimates. The welfare function in Eq. \eqref{eq:welfrad} is $(1 - \alpha)$-geometrically misaligned.
\end{lemma}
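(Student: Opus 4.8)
The plan is to unwind the definition of $\beta$-geometric misalignment, Eq. \eqref{eq:geomis}, at $\beta = 1-\alpha$: since $1-\beta = \alpha$ and $1/(1-\beta) = \alpha^{-1}$, the lemma asserts exactly that $\alpha \le \ui(a)/\w(a) \le \alpha^{-1}$ for every joint action $a$ and every agent $i$. In Example \ref{ex:radio} each agent evaluates the objective through its own estimates of the interference weights, so $\ui(a) = \sum_k \sum_{j:\, a_j \neq a_k} w^i_{kj}$, to be compared with $\w(a) = \sum_k \sum_{j:\, a_j \neq a_k} w_{kj}$ from Eq. \eqref{eq:welfrad}; crucially the two are sums over the \emph{same} family of ordered pairs $(k,j)$, differing only in that the true weight $w_{kj}$ is replaced by $w^i_{kj} \in [\alpha\, w_{kj},\, \alpha^{-1} w_{kj}]$.

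The engine is the elementary mediant inequality: for nonnegative $x_1,\dots,x_m$ and positive $y_1,\dots,y_m$,
\[
\min_{\ell} \frac{x_\ell}{y_\ell} \;\le\; \frac{x_1 + \cdots + x_m}{y_1 + \cdots + y_m} \;\le\; \max_{\ell} \frac{x_\ell}{y_\ell}.
\]
I would fix $a$ and $i$, enumerate by $\ell$ the ordered pairs $(k,j)$ with $a_j \neq a_k$ and $w_{kj} > 0$ (pairs with $w_{kj} = 0$ contribute $0$ to both sums and may be discarded), set $y_\ell = w_{kj}$ and $x_\ell = w^i_{kj}$, and invoke the inequality. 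Each ratio $x_\ell/y_\ell$ lies in $[\alpha, \alpha^{-1}]$ by hypothesis, hence so does $\ui(a)/\w(a)$, which is precisely Eq. \eqref{eq:geomis} with $\beta = 1-\alpha$.

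The mathematical content is therefore this single estimate; what needs care is only the setup. The main point to nail down is that the utility really is the welfare evaluated under agent $i$'s estimates, so that numerator and denominator range over the same atoms and the mediant bound applies term for term — a marginal-contribution-style utility, by contrast, would be only a small fraction of $\w(a)$ and the ratio could drop well below $\alpha$. One should also dispose of the degenerate case $\w(a) = 0$ (every agent on one channel, or all relevant weights vanishing), where $\ui(a) = 0$ as well and Eq. \eqref{eq:geomis} holds vacuously; this is the only edge case and needs no real argument. So the lemma reduces to writing out the two sums and applying the mediant bound.
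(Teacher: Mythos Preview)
Your proposal is correct and follows essentially the same route as the paper: both identify the definition as requiring $\alpha \le \ui(a)/\w(a) \le \alpha^{-1}$ and derive this from the mediant inequality applied term-by-term to the matching sums $\sum w^i_{kj}$ and $\sum w_{kj}$. Your version is slightly more careful in that you explicitly dispose of zero weights and the degenerate case $\w(a)=0$, which the paper leaves implicit.
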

\begin{proof}
We derive a bound for $\frac{\ui(a)}{\w(a)}$ based on the parameters given in Example \ref{ex:env}. First, we verify the identity
\begin{equation*}
    \min_i \frac{x_i}{y_i} \leq \frac{\sum_i x_i}{\sum_i y_i} \leq \max_i \frac{x_i}{y_i}.
\end{equation*}
Let $m = \min_i \frac{x_i}{y_i}$ and $M = \max_i \frac{x_i}{y_i}$. Then $m \sum_i y_i \leq \sum_i x_i \leq M \sum_i y_i$ and the identity hold true by dividing all sides by $\sum_i y_i$. Now we can use this identity to show that
\begin{align*}
    \alpha &\leq \min_{i,j} \frac{w_{ij}^{i}}{w_{ij}} \leq \frac{\ui(a)}{\w(a)} = \frac{\sum_i \sum_{j: k_j \neq k_i} w^i_{ij}}{\sum_i \sum_{j: k_j \neq k_i} w_{ij}} \\
    &\leq \max_{i,j} \frac{w_{ij}^{i}}{w_{ij}} \leq \alpha^{-1},
\end{align*}
from the assumption that $w_{ij}^{i} \in [\alpha \cdot w_{ij}, \alpha^{-1} \cdot w_{ij}]$. Thus, we see that game is geometrically misaligned with $\beta = 1 -\alpha$.
\end{proof}

We claim that $\lambda_c = 1$ and $\mu_c = 3$ are valid constants that satisfy Eq. \eqref{eq:comsmooth} when $k=1$. Under the claim, subbing $\lambda_c$ and $\mu_c$ in Eq. \eqref{eq:posegeo} for geometric misalignment with $\beta = (1-\alpha)$ gives the final expression in Eq. \eqref{eq:misex2}.

Now we show the claim. The action set for agent $i$ can be defined as $\ac_i = \{1, 2\}$, depending on which channel agent $i$ chooses. Let $\hat{a}$ be an arbitrary joint action and $\aopt$ be the joint action that maximizes the welfare. We partition the agent set $\p$ with the following subsets
\begin{align*}
    D &= \{i \in \p: \hat{a}_i = 1 \text{ and } \aopt_i = 2 \}, \\
    O &= \{i \in \p: \hat{a}_i = 2 \text{ and } \aopt_i = 1 \}, \\
    B &= \{i \in \p: \hat{a}_i = 1 \text{ and } \aopt_i = 1 \}, \\
    N &= \{i \in \p: \hat{a}_i = 2 \text{ and } \aopt_i = 2 \}.
\end{align*}
Additionally, for ease of notation, given subsets $S_1, S_2 \subset \p$, we can define $w(S_1, S_2) =  \sum_{i \in S_1} \sum_{j \in S_2} w_{ij} + w_{ji}$. Under these definitions, we have that the welfare function can be written as
\begin{align*}
    \w(\hat{a}) &= \sum_i \sum_{j : \hat{a}_i \neq \hat{a}_j} w_{ij} \\
    &= w(B, N) + w(D, N) + w(O, B) + w(O, D).
\end{align*}
Similarly, it can be verified that $\w(\aopt) = w(B, N) + w(O, N) + w(D, B) + w(D, O)$. Next, we rewrite the sum of deviations as 
\begin{align*}
    &\sum_{i \in \p} \w(a)  - \w(\aopt_i, \hat{a}_{-i}) = \\
    &w(D, N) + w(O, B) + 2 w(O, D) - \\
    &w(D, D) - w(O, O) - w(D, B) - w(O, N).
\end{align*}
From these definitions, we have the following set of inequalities,
\begin{align*}
    \w(\aopt) + \sum_{i \in \p} \w(a)  - \w(\aopt_i, a_{-i}) &\leq \\
    3 w(B, O) + w(B, N) + w(D, N) + & \\
    w(O, B) - w(D, D) - w(O, O) &\leq \\
    3 w(B, O) + 3 w(B, N) + 3 w(D, N) + 3 w(O, B) &\leq 3 \w(\hat{a}).
\end{align*}
Since $\hat{a}$ was chosen arbitrarily, we see that Eq. \eqref{eq:comsmooth} is satisfied for $\lambda_c = 1$ and $\mu_c = 3$ for all joint actions. Thus the guarantee in Eq. \eqref{eq:misex2} holds.
\end{proof}

\end{document}